 \newtheorem{lemma}{Lemma}[section]
 \newtheorem{theorem}{Theorem}[section]
\newtheorem{remark}{Remark}[section]
\definecolor{dgreen}{rgb}{0,.6,0}
\begin{document}
\preprint{ }

\title{Modulated amplitude waves with nonzero phases in Bose-Einstein
condensates}

\author{Qihuai Liu }
\email{qhuailiu@gmail.com}

\altaffiliation[Corresponding author at: ]{School of Mathematics
and Computing Science, Guilin University of Electronic Technology,
No. 1, Jinji Street, Guilin 541004, China. Tel./Fax: +086 0773
3939803. }

\affiliation{School of Mathematics and Computing Science, Guilin
University of Electronic Technology, Guilin 541002, P. R. China}

\author{Dingbian Qian }
\affiliation{School  of Mathematical Sciences, Soochow University,
Suzhou 215006, P. R. China}

\date{\today}

\begin{abstract}
In this paper we give a frame for application of the averaging
method to Bose-Einstein condensates (BECs) and obtain an abstract
result upon the dynamics of BECs. Using the averaging method, we
determine the location where the modulated amplitude waves
(periodic or quasi-periodic) exist and obtain that all these
modulated amplitude waves (periodic or quasi-periodic) form a
foliation by varying the integration constant continuously.
Compared with the previous work, modulated amplitude waves studied
in this paper have nontrivial phases and this makes the problem
become more difficult, since it involves some singularities.

\end{abstract}


\pacs{05.45.-a, 03.75.Lm, 05.30.Jp, 05.45.Ac}

\keywords{Modulated amplitude waves; Gross-Pitaevskii equations;
Bose-Einstein condensates; Periodic potentials; Averaging method}
\maketitle

\section{Introduction}
\label{Sec 1} The experimental realization of Bose-Einstein
condensates (BECs) in dilute alkali-metal atomic vapors
\cite{anderson1995observation,davis1995bose} has sparked a large
mathematical and physical interest in the study of dynamics of
condensates, such as solitons
\cite{strecker2002formation,khaykovich2002formation,burger1999dark,denschlag1999guiding,belmonte2008existence,belmonte2009solitary},
chaos \cite{chong2004spatial,xie2003chaotic}, stability and
instability~\cite{centurion2006modulational,kapitula1998stabilitya,kapitula2010interaction,jones1993stability,kapitula1998stabilityb},
periodic and quasi-periodic behaviors
\cite{van2007quasiperiodic,deconinck2002dynamics}.

At ultra-low temperatures, based on mean-field approximation and
quasi-one-dimensional [quasi-(1D)] regime, the time-dependent
condensate wave function (``order parameter'') $\psi(x,t)$ is
governed by the following cubic nonlinear schr\"{o}dinger equation
(NLS) \cite{dalfovo1999theory,baizakov2002regular,khler2002three}
\begin{equation}\label{1}
 i\hbar\frac{\partial\psi}{\partial t}=-\frac{\hbar^2}{2m}\frac{\partial^2\psi}{\partial
 x^2}+g|\psi|^2\psi+V_0(x)\psi,
\end{equation}
which is also known as the Gross-Pitaevskii (GP) equation. Here,
$|\psi|^2$ is the number density, $V_0(x)$ is an external
potential, $g=[4\pi\hbar^2 a/m][1+\mathcal{O}(\zeta^2)]$, and
$\zeta=\sqrt{|\psi|^2|a|^3}$ is the dilute gas parameter. The
$s$-wave scattering length $a$ is determined by the atomic species
of the condensate. Interactions between atoms are repulsive when
$a>0$ and attractive when $a<0$. In collisionally inhomogeneous
BECs, the scattering length is subjected to a spatial periodic
variation: $a(x)=a(x+L_0)$ for some period $L_0$ leading to
nonlinear potentials, which has been realized experimentally
\cite{carpentier2006analysis,abdullaev2005propagation} and studied
theoretically \cite{porter2007modulated,tsang2010exact}. Also, we
can refer to a recent comprehensive review
\cite{kartashov2011solitons}.

Spatially periodic potentials $V(x)$, created as optical lattices
(OLs), which arise as interference patterns produced by coherent
counterpropagating laser beams illuminating the condensate, are of
interest in the context of BECs and have been employed in both
experimental and theoretical studies
\cite{anderson1998macroscopic,hagley1999well,bronski2001bose,choi1999bose,trombettoni2001discrete}.

In this study, we investigate spatially extended solutions of BECs
in periodic OLs. We apply a coherent structure ansatz to
(\ref{1}), yielding a parametrically forced Duffing equation with
singularities, which describes the spatial evolution of the field.
We employ the averaging method to study the periodic orbits (MAWs)
including their hyperbolic spatial structures, and illustrate
their dynamical behaviors with numerical simulation of the GP
equation.

Compared with the previous work
\cite{porter2004modulated,porter2007modulated,porter2004perturbative,porter2006feshbach,porter2004resonant,chua2006spatial,porter2005bose},
the phases of MAWs considered in this paper are nontrivial, not
constant, which continuously depend on the spatial variation.
Nontrivial phase solutions have more complicated dynamics
\cite{chong2004spatial,hai2004propagation} and imply nonzero
current of the matter - it is proportional to $R^2(x)\theta(x)=c$,
for amplitude $R(x)$ of MAW and nonzero constant $c$ -along
$x$-axis, and hence seem to have no direct relation to present
experimental setting for BECs
\cite{kostov2004two,carr2001stability}(remember that the
condensate in this paper is confined to be a parabolic trap). Of
course, dealing with the MAWs with nonzero phases will take more
difficulties than before. One reason is that the forced Duffing
oscillator derived from GP equation has singularities at the
origin and the directly application of the usual perturbation
theory is unavailable because of the loss of smoothness and the
existence of strong nonlinear term (singular term).

The averaging method
\cite{bogoliubovasymptotic,sanders2007averaging,guckenheimer1983nonlinear}
at its heart is a transformation procedure leading to a systematic
perturbation expansion and completes with error bounds on the
difference between exact and approximation solutions. Also, it is
a important tool for proving properties of the exact problem based
on properties of the approximation problem
\cite{ellison1995method}. For example, the existence of periodic
orbits can be proved using averaging together with the implicit
function theorem, and the existence of invariant tori can be
proved using averaging together with the Moser twist theorem.
However, applying the averaging method for singular system, the
first problem we must deal with is how to put the system into the
standard form of averaging.

The paper is organized as follows. In Section \ref{SEC 2}, we
introduce modulated amplitude waves involving periodic and
quasi-periodic, and in Section \ref{Sec 3}, we apply a
transformation to transform the GP equation to a standard form of
averaging. An abstract result upon the dynamics to BECs  is
obtained in Section \ref{Sec 4}, and in Section \ref{Sec 5} we
analyze and demonstrate some of the spatial dynamical features of
BECs with a positive chemical potential. Finally, we summarize our
results in Section \ref{Sec 6}.

\section{Coherent structure and modulated amplitude wave}\label{SEC 2}\setcounter{equation}{0}

We consider uniformly propagating coherent structures with the
ansatz
\begin{equation}\label{2}
\psi(t,x)=R(x)\exp(i[\Theta(x)-\mu t]),
\end{equation}
where $R(x)\in\mathbb{R}$ gives the amplitude dynamics of the
condensate wave function, $\theta(x)$ determines the phase
dynamics, and the ``chemical potential'' $\mu$, defined as the
energy which takes to add one more particle to the system, is
proportional to the number of atoms trapped in the condensate.
When the (temporally periodic) coherent structure (\ref{2}) is
also spatially periodic, it is called a \emph{modulated amplitude
wave} (MAW) \cite{brusch2000modulated,brusch2001modulated}.
Similarly, a solution of the equation (\ref{1}) with the
(temporally periodic) coherent structure (\ref{2}) is called a
\emph{quasi-periodic modulated amplitude wave} (QMAW) if it is
also spatially quasi-periodic.

Inserting (\ref{2}) into (\ref{1}), we obtain the following two
couple nonlinear ordinary differential equations
\begin{equation}\label{3}
R''+\delta R-\frac{c^2}{R^3}+\varepsilon\alpha R^3+\varepsilon
V(x)R=0,
\end{equation}
\begin{equation}\label{4}
\Theta''+2\Theta'R'/R=0~\Rightarrow~\Theta'(x)=\frac{c}{R^2},\quad\quad
\end{equation}
where
\begin{equation*}
    \delta:=\frac{2m\mu}{\hbar},~\varepsilon\alpha:=-\frac{2mg}{\hbar^2},~\varepsilon
    V(x):=-\frac{2m}{\hbar^2}V_0(x)
\end{equation*}
and the integration constant $c$, determined by the velocity and
number density, plays the role of ``angular momentum''
\cite{bronski2001bose}.

 Inspecting (\ref{3}) we know that in case of $c=0$, i.e., the
 phase of the condensate wave function is trivial, it is the
 parametrically driven Duffing equation with the time variable
 replaced by the spatial coordinate, and MAWs (standing waves) in this
 system with $V(x)=V_0\cos\kappa x$ or $V(x)=V_1\cos\kappa_1 x+V_2\cos\kappa_2x$ have been widely studied \cite{porter2004perturbative,porter2007modulated,porter2005bose}.

 In general, $c\neq 0$, the system (\ref{3}) becomes more
 complicated and the phase is no longer constant
 \cite{chong2004spatial}. Even the amplitude $R(x)$, a solution of
 (\ref{3}), is $L$-periodic, the corresponding condensate wave
 function $\psi(x,t)$ may be not periodic with respect to the
 spatial variable $x$. In fact,
 \begin{eqnarray*}
    \psi(t,x)&=&R(x)\mathrm{exp}{i[\Theta(x)-\mu t]}\\
    &=&R(x)\big(\cos[\bar{\Theta}(x)+\nu x-\mu t]+i\sin[\bar{\Theta}(x)+\nu x-\mu
    t]\big),
\end{eqnarray*}
where
\begin{equation*}
    \nu=\frac{1}{L}\int_{x_0}^{x_0+L}\frac{c}{R^2(\xi)}\mathrm{d}\xi
\end{equation*}
and $\bar{\Theta}(x)=\Theta(x)-\nu$ is a $L$-periodic function
with zero mean value. If $2\pi/\nu$ and $L$ are rationally
related, then $\psi(x,t)$ is a MAW; if $2\pi/\nu$ and $L$ are
rationally irrelevant, then $\psi(x,t)$ is not periodic but
quasi-periodic, which is corresponding to a QMAW with the
frequency $\omega=\langle2\pi/\nu, L\rangle$.

There also exists an interesting and surprising result. Note that
$\nu=0$ when $c=0$ and $\nu\neq0$ when $c\neq0$. If we vary $c$ on
the interval $(-\infty,+\infty)$, by continuous dependence of
solutions with respect to the parameters, $\nu$ can continuously
take the value on some interval, which implies that (\ref{1}) has
infinitely many (positive measure set) MAWs and QMAWs by adjusting
the integration constant $c$. Thus, all these MAWs and QMAWs form
a foliation.

In this paper, we consider the case $\delta>0$ corresponding to a
positive chemical potential. Also, in order that the mathematical
results obtained in this paper do apply to more general periodic
functions, we assume that the external potential $V(x)$ is an
analytic and $L$-periodic function (OLs). Note that (\ref{3})
defines on two half-planes, and we only consider the case of the
right half-plane since there are no distinct technicalities.

\section{Transformation to standard form of averaging}\label{Sec
3}\setcounter{equation}{0}

Rewrite equation (\ref{3}) in the planar equivalent form
\begin{equation}\label{3.1}
\left\{ \begin{array}{ll}
R'=S\\
S'=-\delta R+\displaystyle\frac{c^2}{R^3}-\varepsilon \alpha
R^3-\varepsilon V(x)R.
\end{array}
\right.
\end{equation}
Generally, averaging method involves two steps: transforming to
standard form; solving the averaging equation. In order to proceed
we need to transform (\ref{3.1}) to a standard form for the method
of averaging. So we have the following result.

\begin{lemma}\label{LM 3.1} Under the transformation $\Psi:\mathbb{T}\times
\big(\sqrt[4]{\frac{c^2}{\delta}},+\infty \big)\rightarrow
(0,+\infty)\times \mathbb{R}$ defined by
\begin{equation*}
\left\{ \begin{array}{ll} R
=\rho\sqrt{\cos^2(\sqrt{\delta}x+\theta)+\displaystyle\frac{c^2}{\delta\rho^4}\sin^2(\sqrt{\delta}x+\theta)}\\[2em]
S
=\rho\sqrt{\delta}\left(\displaystyle\frac{c^2}{\delta\rho^4}-1\right)\displaystyle\frac{\cos(\sqrt{\delta}x+\theta)\sin(\sqrt{\delta}x+\theta)}
{\sqrt{\cos^2(\sqrt{\delta}x+\theta)+\displaystyle\frac{c^2}{\delta\rho^4}\sin^2(\sqrt{\delta}x+\theta)}}
,
\end{array}
\right.
\end{equation*}
system \emph{(\ref{3.1})} changes into a new system
\begin{equation}\label{3.2}
\left\{ \begin{array}{llll} \rho'
=\varepsilon\left\{\displaystyle\frac{\alpha}{\sqrt{\delta}}\rho^3\left[\frac{1}{4}(1+\frac{c^2}{\delta\rho^4})\sin2(\sqrt{\delta}x+\theta)
+\frac{1}{8}(1-\frac{c^2}{\delta\rho^4})\sin4(\sqrt{\delta}x+\theta)\right]\right.\\
~~~~~~~~~~~~~~~~~~~~~~~~~~~~~~~~~~~~~+\left.\displaystyle\frac{\rho}{2\sqrt{\delta}}V(x)\sin2(\sqrt{\delta}x+\theta)\right\}\\[2em]
\theta'
=\varepsilon\left\{\displaystyle\frac{\alpha(\delta\rho^4+c^2)}{8\delta^2\rho^2}\big(3+\cos4(\sqrt{\delta}x+\theta)\big)+
\displaystyle\frac{\alpha(\delta^2\rho^8+c^4)}{2\delta^2\rho^2(\delta\rho^4-c^2)}\cos2(\sqrt{\delta}x+\theta) \right.\\[1em]
~~~~~~~~~~~~~~~~~~~~~~~~~~~~~~~~~~~~~+\left.\displaystyle\frac{1}{2\delta}V(x)\left(1+\frac{\delta\rho^4+c^2}{\delta\rho^4-c^2}\cos2(\sqrt{\delta}x+\theta)\right)\right\}
\end{array}
\right.
\end{equation}
with the new coordinates $(\theta,\rho)$ in the half-plane
$\mathbb{T}\times \big(\sqrt[4]{\frac{c^2}{\delta}},+\infty
\big)$.
\end{lemma}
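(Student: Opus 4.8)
The plan is to recognise $\Psi$ as the change of variables that reduces the unperturbed part of (\ref{3.1}) (the case $\varepsilon=0$) to the trivial flow, and then to read off (\ref{3.2}) by the chain rule. For $\varepsilon=0$, system (\ref{3.1}) is Hamiltonian with first integral
\[
I(R,S)=\tfrac12 S^2+\tfrac12\delta R^2+\frac{c^2}{2R^2},
\]
whose level sets with $I>\sqrt{\delta}\,|c|$ are the closed orbits in the right half-plane, and $\Psi$ parametrises exactly these. Writing $\phi:=\sqrt{\delta}\,x+\theta$, the first step is to record, with $\rho,\theta$ held as parameters, the algebraic identities satisfied by the two defining formulas: (i) $\partial_x R=S$; (ii) $\partial_x S=-\delta R+c^2/R^3$; (iii) $I(R,S)=\tfrac12\delta\rho^2+\frac{c^2}{2\rho^2}$, which is independent of $\phi$; and (iv) the rewriting $S=-\dfrac{\delta\rho^4-c^2}{\sqrt{\delta}\,\rho^3}\cdot\dfrac{\cos\phi\sin\phi}{\sqrt{\cos^2\phi+\frac{c^2}{\delta\rho^4}\sin^2\phi}}$, which exhibits the factor $\delta\rho^4-c^2$. (Note (ii) in fact follows from (i) and (iii).) Identities (i)--(ii) say that $\Psi$ carries the lines $\{\rho,\theta\ \mathrm{const}\}$ onto solutions of the unperturbed system; (iii) says $\rho$ is a smooth monotone function of the energy; and (iv), together with the domain restriction $\rho>\sqrt[4]{c^2/\delta}$, gives $\delta\rho^4-c^2>0$ there, so that the radicand $\cos^2\phi+\frac{c^2}{\delta\rho^4}\sin^2\phi$ is bounded away from $0$ and $\Psi$ is a diffeomorphism onto its image in $(0,+\infty)\times\mathbb{R}$.

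Next, the $\rho$-equation falls out of (iii). Along a solution of (\ref{3.1}), the chain rule and (\ref{3.1}) give $\frac{d}{dx}I(R,S)=SS'+(\delta R-c^2/R^3)S=-\varepsilon S(\alpha R^3+V(x)R)$, while by (iii) the same derivative equals $(\delta\rho-c^2/\rho^3)\rho'$. Hence $\rho'=-\varepsilon\,\rho^3 S(\alpha R^3+V(x)R)/(\delta\rho^4-c^2)$, and substituting (iv) cancels the denominator, leaving $\rho'=\frac{\varepsilon}{\sqrt{\delta}}(\alpha R^3+V(x)R)\cdot\dfrac{\cos\phi\sin\phi}{\sqrt{\cos^2\phi+\frac{c^2}{\delta\rho^4}\sin^2\phi}}$. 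Now replace $R=\rho\sqrt{\cos^2\phi+\frac{c^2}{\delta\rho^4}\sin^2\phi}$ (so that $R^3$ divided by the radical becomes $\rho^3(\cos^2\phi+\frac{c^2}{\delta\rho^4}\sin^2\phi)$), expand $\cos\phi\sin\phi\cos^2\phi$ and $\cos\phi\sin\phi\sin^2\phi$ into $\sin 2\phi$ and $\sin 4\phi$, and collect terms; this yields the first line of (\ref{3.2}).

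For the $\theta$-equation, note that $R$ depends on $x$ and $\theta$ only through $\phi$, so $\partial_\theta R=\frac{1}{\sqrt{\delta}}\partial_x R=\frac{S}{\sqrt{\delta}}$ by (i). Differentiating $R(x)=R(\rho(x),\theta(x),x)$ and using $R'=S$ with (i) gives the constraint $\partial_\rho R\cdot\rho'+\partial_\theta R\cdot\theta'=0$, hence $\theta'=-\sqrt{\delta}\,\dfrac{\partial_\rho R}{S}\,\rho'$; since $S$ and $\rho'$ share the factor $\cos\phi\sin\phi$, this quotient is regular on all of $\mathbb{T}$. Inserting the explicit $\partial_\rho R$ computed from the defining formula and the $\rho'$ just found, converting again to multiple angles, and simplifying with $\cos^2\phi+\frac{c^2}{\delta\rho^4}\sin^2\phi=\frac{(\delta\rho^4+c^2)+(\delta\rho^4-c^2)\cos 2\phi}{2\delta\rho^4}$, produces the second line of (\ref{3.2}), including its singular coefficient $(\delta\rho^4+c^2)/(\delta\rho^4-c^2)$. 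Equivalently, one may adjoin to the above constraint the analogous relation obtained by differentiating $S(x)=S(\rho(x),\theta(x),x)$ and using the second equation of (\ref{3.1}) with (ii), and then solve the resulting $2\times2$ linear system for $(\rho',\theta')$ by Cramer's rule, the determinant being the Jacobian of $\Psi$, which is nonzero exactly because $\delta\rho^4>c^2$.

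The main obstacle is not conceptual but computational: verifying identities (i)--(iv) — especially the collapse in (iii), where the degree-four trigonometric numerator must be shown to factor as $\big(1+\frac{c^2}{\delta\rho^4}\big)$ times the radicand — and then pushing the trigonometric reductions through cleanly, particularly in the $\theta'$ equation, where the factor $\delta\rho^4-c^2$ must be tracked in both numerator and denominator. The single genuinely structural remark is that the domain $\mathbb{T}\times(\sqrt[4]{c^2/\delta},+\infty)$ is forced: on it $\delta\rho^4-c^2>0$, the radicand stays positive, $\Psi$ is a diffeomorphism, and the singular term in (\ref{3.2}) is finite, whereas at $\rho=\sqrt[4]{c^2/\delta}$ the orbit degenerates to the equilibrium $R\equiv(c^2/\delta)^{1/4}$, $S\equiv0$ and the parametrisation breaks down.
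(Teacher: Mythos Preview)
Your proposal is correct and follows essentially the same route as the paper: recognise $\Psi$ as the variation-of-constants transformation built from the explicit periodic solutions of the unperturbed isochronous system, obtain $\rho'$ by differentiating the conserved energy $I(R,S)=V(\rho)$ along (\ref{3.1}), and obtain $\theta'$ from the chain-rule constraint $\partial_\rho R\cdot\rho'+\partial_\theta R\cdot\theta'=0$ (equivalently the paper's $\frac{\partial\varphi}{\partial x}\theta'+\frac{\partial\varphi}{\partial\rho}\rho'=0$). Your write-up is in fact more explicit than the paper's about the trigonometric reductions, the diffeomorphism property on $\rho>\sqrt[4]{c^2/\delta}$, and the cancellation that keeps $\theta'$ regular where $S$ vanishes, but the underlying argument is the same.
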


\noindent\textbf{Proof.} First, it is easy to verify that, for
each $\rho\in\big(\sqrt[4]{\frac{c^2}{\delta}},+\infty \big)$ and
$\theta\in \mathbb{T}$,
\begin{equation*}
    \varphi(x;\theta,\rho)=\rho\sqrt{\cos^2(\sqrt{\delta}x+\theta)+\displaystyle\frac{c^2}{\delta\rho^4}\sin^2(\sqrt{\delta}x+\theta)}
\end{equation*}
is a periodic solution with the same period
$\tau=2\pi/\sqrt{\delta}$ of the unperturbed equation
\begin{equation*}
R''+\delta R-\frac{c^2}{R^3}=0
\end{equation*}
or the equivalent Hamiltonian system
\begin{equation}\label{3.3}
\left\{ \begin{array}{ll} R' =S\\ S' =-\delta
R+\displaystyle\frac{c^2}{R^3}
\end{array}
\right.
\end{equation}
with the initial value
\begin{equation*}
\varphi(-\frac{\theta}{\sqrt{\delta}};\theta,\rho)=\rho,~~\varphi'(-\frac{\theta}{\sqrt{\delta}};\theta,\rho)=0.
\end{equation*}
That is to say, (\ref{3.3}) is an isochronous system. The total
energy of system (\ref{3.3}) is given by
\begin{equation*}
    \frac{1}{2}S^2+V(R)=V(\rho),
\end{equation*}
where
\begin{equation*}
    V(R)=\frac{\delta R^2}{2}+\frac{c^2}{2R^2},
\end{equation*}
and as we know, all periodic solutions lie on curves of constant
energy. We will use these facts below.

Now using the variation of constants, the functions $\rho(\cdot),
\theta(\cdot)$ can be defined such that
\begin{equation*}
    R(x)=\varphi(x+\theta(x),\rho(x)),~~S(x)=\frac{\partial\varphi}{\partial
    x}(x+\theta(x),\rho(x)).
\end{equation*}
From the conservation of the Hamiltonian, it follows that
\begin{equation}\label{3.4}
 \frac{1}{2}\Big(\frac{\partial\varphi}{\partial x}\Big)^2+V(\varphi)=V(\rho),
\end{equation}
so the differentiating with respect to $x$ along with system
(\ref{3.1}) yields
\begin{align*}
     V'(\rho)\displaystyle\frac{\mathrm{d}\rho}{\mathrm{d}x}&=\left.\displaystyle\frac{\mathrm{d}~
     }{\mathrm{d}x}\left(\frac{1}{2}S^2(x,\theta,\rho)+V(R(x,\theta,\rho))\right)\right|_{(\ref{3.1})}\\
     &=-\varepsilon\left[\alpha
     R^3(x,\theta,\rho)+V(x)R(x,\theta,\rho)\right]S(x,\theta,\rho).
\end{align*}
Then, we have
\begin{equation}\label{3.5}
\frac{\mathrm{d}\rho}{\mathrm{d}x}=-\varepsilon\frac{1}{V'(\rho)}\left[\alpha
     R^3(x,\theta,\rho)+V(x)R(x,\theta,\rho)\right]S(x,\theta,\rho).
\end{equation}
So, using the definition of $\Psi$ together with equation
(\ref{3.5}), we have the first desired expression of (\ref{3.2}).

Using the formula for $R'$ given in system (\ref{3.1}) and the
definition of $\rho(\cdot)$ and $\theta(\cdot)$, we obtain that
\begin{equation*}
\frac{\partial \varphi}{\partial
x}\left(1+\frac{\mathrm{d}\theta}{\mathrm{d}x}\right)+\frac{\partial
\varphi}{\partial
\rho}\frac{\mathrm{d}\rho}{\mathrm{d}x}=\frac{\partial
\varphi}{\partial x}.
\end{equation*}
Together with equation (\ref{3.5}), after some simple algebraic
manipulations, it follows that
\begin{equation*}
\frac{\mathrm{d}\theta}{\mathrm{d}x}=\varepsilon\frac{1}{V'(\rho)}\left[\alpha
     R^3(x,\theta,\rho)+V(x)R(x,\theta,\rho)\right]\frac{\partial
\varphi}{\partial \rho}.
\end{equation*}
Finally, using the definition of $\Psi$ together with equation
(\ref{3.5}), we have the second desired expression of
(\ref{3.2}).\qed

In general, finding a explicit expression of the transformation is
not easy. The transformation $\Psi$ given in Lemma \ref{LM 3.1}
has a more delicate information, such as protecting two-form
\begin{equation*}
    \mathrm{d}R\wedge
    \mathrm{d}S=\sqrt{\delta}\big(\rho-\frac{c^2}{\delta\rho^3}\big)\mathrm{d}\rho\wedge\mathrm{d}\theta,
\end{equation*}
which is not used in this paper. However, we believe that it will
be helpful for further study.

\section{An abstract result of averaging to BECs}\label{Sec
4}\setcounter{equation}{0}

In this section, we will give an abstract result to BECs by the
method of averaging. We also assume that the period $L$ of the
external potential $V(x)$ satisfies that
\begin{equation*}
    \frac{L}{\tau}=\frac{q}{p}\in\mathbb{Q},~~(q,p)=1,~p,q\in\mathbb{Z}^+,
\end{equation*}
where $\tau$ is the least period of the unperturbed system. The
basic idea that leads to the application of the method of
averaging arises from an inspection of system (\ref{3.2}). The
derivatives with respect to the spatial variable $x$ are all
proportional to $\varepsilon$. Hence, if $\varepsilon$ is small,
the variables would be expected to remain near their constant
unperturbed values over a long spatial scale.

A good approximation of (\ref{3.2}) up to the spatial domains of
order $1/\varepsilon$ is given by the averaged system
\begin{equation}\label{4.1}
\left\{ \begin{array}{ll}
\bar{\rho}'=\varepsilon\bar{\rho}\Phi(\bar{\theta})\\[0.5em]
\bar{\theta}'=\varepsilon\Big(\displaystyle\frac{\alpha_0}
{\bar{\rho}^2}(\delta\bar{\rho}^4+c^2)+\alpha_1+\displaystyle\frac{\delta\bar{\rho}^4+c^2}{2\sqrt{\delta}(\delta\bar{\rho}^4-c^2)}\Phi'(\bar{\theta})\Big),
\end{array}
\right.
\end{equation}
where
\begin{align*}
\Phi(\bar{\theta})&=\frac{1}{2\sqrt{\delta}\tilde{L}}\int_0^{\tilde{L}}V(s)\sin2(\sqrt{\delta}s+\bar{\theta})\mathrm{d}s,\\
\alpha_0&=\frac{3\alpha}{8\delta^2\tilde{L}},\\
\alpha_1&=\frac{1}{2\delta
\tilde{L}}\int_0^{\tilde{L}}V(s)\mathrm{d}s,~~\tilde{L}=qp\min\{L,\tau\}.
\end{align*}
We have the following theorem.
\begin{theorem}\label{TH 4.1}
There exists a $c^r,r\geq2$, change of variables
\begin{equation*}
    \rho=\bar{\rho}+\varepsilon
    w_1(\bar{\theta},\bar{\rho},x,\varepsilon),~~  \theta=\bar{\theta}+\varepsilon w_2(\bar{\theta},\bar{\rho},x,\varepsilon)
\end{equation*}
with $w_1,w_2$ $\tilde{L}$-periodic functions of $x$, transforming
\emph{(\ref{3.2})} into
\begin{equation*}
\left\{ \begin{array}{ll}
\bar{\rho}'=\varepsilon\bar{\rho}\Phi(\bar{\theta})+\varepsilon^2
    g_1(\bar{\theta},\bar{\rho},x,\varepsilon)\\[0.5em]
\bar{\theta}'=\varepsilon\Big(\displaystyle\frac{\alpha_0}
{\bar{\rho}^2}(\delta\bar{\rho}^4+c^2)+\alpha_1+\displaystyle\frac{\delta\bar{\rho}^4+c^2}{2\sqrt{\delta}(\delta\bar{\rho}^4-c^2)}\Phi'(\bar{\theta})\Big)
+\varepsilon^2
    g_2(\bar{\theta},\bar{\rho},x,\varepsilon)
\end{array}
\right.
\end{equation*}
with $g_1,g_2$ $\tilde{L}$-periodic functions of $x$. Moreover,

\emph{(i)} If $(\theta_\varepsilon(x),\rho_\varepsilon(x))$ and
$(\theta_0(x),\rho_0(x))$ are solutions of the original system
\emph{(\ref{3.2})} and the averaged system \emph{(\ref{4.1})}
respectively, with the initial values such that
\begin{equation*}
    |\,\rho_\varepsilon(0)-\rho_0(0)|+|\,\theta_\varepsilon(0)-\theta_0(0)|=\mathcal{O}(\varepsilon),
\end{equation*}
then
\begin{equation*}
    |\,\rho_\varepsilon(x)-\rho_0(x)|+|\,\theta_\varepsilon(x)-\theta_0(x)|=\mathcal{O}(\varepsilon),
\end{equation*}
for the spatial domains $x$ of order $1/\varepsilon$.

\emph{(ii)} If there exist two constants
$\rho_0\in(\sqrt[4]{\frac{c^2}{\delta}},+\infty \big),
\theta_0\in\mathbb{R}$ such that
\begin{align}
 \label{4.2}   &\Phi(\theta_0)=0,\\
\label{4.3}  &\displaystyle\frac{\alpha_0}
{\bar{\rho}_0^2}(\delta\bar{\rho}_0^4+c^2)+\alpha_1+\displaystyle\frac{\delta\bar{\rho}^4_0+c^2}{2\sqrt{\delta}(\delta\bar{\rho}^4_0-c^2)}\Phi'(\bar{\theta}_0)
=0,\\
\label{4.4} &\Phi'(\theta_0)\neq0,\\
\label{4.5}
&2\alpha_0(\delta\rho_0-\frac{c^2}{\rho^2_0})-\frac{4\sqrt{\delta}c^2\rho_0^3}{(\delta\bar{\rho}^4_0-c^2)^2}\Phi'(\theta_0)\neq
0,
\end{align}
i.e., $(\theta_0,\rho_0)$ is an equilibrium point of
\emph{(\ref{4.1})} such that the corresponding Jacobian matrix has
no eigenvalue equaling to zero, then \emph{(\ref{3.2})} admits a
$\tilde{L}$-periodic solution
$(\theta_\varepsilon(x),\rho_\varepsilon(x))$ such that
\begin{equation*}
    |\,\rho_\varepsilon(x)-\rho_0|+|\,\theta_\varepsilon(x)-\theta_0|=\mathcal{O}(\varepsilon),
\end{equation*}
for sufficiently small $\varepsilon$; if, in addition,
\begin{align}
\label{4.6}
2\alpha_0(\delta\rho_0-\frac{c^2}{\rho^2_0})\Phi'(\theta_0)-\frac{4\sqrt{\delta}c^2\rho_0^3}{(\delta\bar{\rho}^4_0-c^2)^2}[\Phi'(\theta_0)]^2>
0,
\end{align}
then the $\tilde{L}$-periodic solution
$(\rho_\varepsilon(x),\theta_\varepsilon(x))$ is hyperbolic and
instable with respect to the spatial variable $x$.
\end{theorem}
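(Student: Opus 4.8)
The idea is to regard \eqref{3.2} as a system \emph{already} in the standard form of first-order averaging, $\mathbf{z}'=\varepsilon f(\mathbf{z},x,\varepsilon)$ with $\mathbf{z}=(\rho,\theta)$, where $f$ is analytic in $\mathbf{z}$ and $\tilde L$-periodic in $x$ (recall $L/\tau=q/p$, so $\tilde L=qp\min\{L,\tau\}$ is a common multiple of both $\tau=2\pi/\sqrt{\delta}$ and $L$), but defined only on the open half-plane $\mathbb{T}\times(\sqrt[4]{c^2/\delta},+\infty)$. First I would fix a compact annulus $K=\mathbb{T}\times[a,b]$ with $\sqrt[4]{c^2/\delta}<a<b<\infty$, on which the singular denominators $\delta\rho^4-c^2$ and $\rho$ are bounded away from zero, so that $f$ and everything built from it is smooth (indeed analytic, since $V$ is) with uniform bounds. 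On $K$ I would define the near-identity transformation in the classical way, $w_i(\bar\theta,\bar\rho,x)=\int_0^x\big(f_i(\bar\theta,\bar\rho,s,0)-\bar f_i(\bar\theta,\bar\rho)\big)\,\mathrm{d}s$, where $\bar f_i$ is the mean of $f_i(\cdot,\cdot,s,0)$ over $s\in[0,\tilde L]$; the integrand has zero $\tilde L$-mean, so $w_i$ is $\tilde L$-periodic in $x$. Substituting $\rho=\bar\rho+\varepsilon w_1$, $\theta=\bar\theta+\varepsilon w_2$ into \eqref{3.2}, inverting the matrix $I+\varepsilon D_{\bar{\mathbf{z}}}w=I+\mathcal{O}(\varepsilon)$, and Taylor-expanding $f$ in its first argument, the term $\partial_x w$ cancels the oscillatory (zero-mean) part of $f$ to leading order, leaving $\bar{\mathbf{z}}'=\varepsilon\bar f(\bar{\mathbf{z}})+\varepsilon^2 g(\bar{\mathbf{z}},x,\varepsilon)$ with $g$ again $\tilde L$-periodic in $x$; computing the averages of the trigonometric terms in \eqref{3.2} (using $\langle V(s)\sin2(\sqrt{\delta}s+\bar\theta)\rangle=2\sqrt{\delta}\,\Phi(\bar\theta)$, $\langle V(s)\cos2(\sqrt{\delta}s+\bar\theta)\rangle=\sqrt{\delta}\,\Phi'(\bar\theta)$, and the vanishing of the pure $\sin$/$\cos$ averages) identifies $\bar f$ with the right-hand side of \eqref{4.1}. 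This establishes the change-of-variables assertion.

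For (i) I would pass to the slow variable $y=\varepsilon x$, so the transformed system reads $\mathrm{d}\bar{\mathbf{z}}/\mathrm{d}y=\bar f(\bar{\mathbf{z}})+\varepsilon g$ and the averaged system reads $\mathrm{d}\mathbf{u}/\mathrm{d}y=\bar f(\mathbf{u})$. As long as the averaged solution $(\theta_0(x),\rho_0(x))$ remains in a compact subset of the admissible domain for $x\in[0,C/\varepsilon]$ (which must be assumed, since otherwise neither solution lives that long), Gronwall's inequality on the $y$-interval $[0,C]$, together with the $\mathcal{O}(\varepsilon)$ bound on $g$ and the Lipschitz bound on $\bar f$ over $K$, gives $\|\bar{\mathbf{z}}(y)-\mathbf{u}(y)\|=\mathcal{O}(\varepsilon)$ whenever the initial data are $\mathcal{O}(\varepsilon)$-close; undoing the near-identity change of variables costs only a further $\mathcal{O}(\varepsilon)$, so $|\rho_\varepsilon(x)-\rho_0(x)|+|\theta_\varepsilon(x)-\theta_0(x)|=\mathcal{O}(\varepsilon)$ for $x=\mathcal{O}(1/\varepsilon)$.

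For the existence part of (ii) I would work with the period map $P_\varepsilon:\bar{\mathbf{z}}(0)\mapsto\bar{\mathbf{z}}(\tilde L)$ of the transformed system and the auxiliary map $\widehat F(\mathbf{z},\varepsilon):=\varepsilon^{-1}\big(P_\varepsilon(\mathbf{z})-\mathbf{z}\big)$, extended by $\widehat F(\mathbf{z},0)=\tilde L\,\bar f(\mathbf{z})$; integrating the equation over one period and invoking smooth dependence on parameters shows $\widehat F$ is $C^r$ (analytic) near $(\mathbf{z}_0,0)$, where $\mathbf{z}_0=(\theta_0,\rho_0)$. By \eqref{4.2}--\eqref{4.3} one has $\bar f(\mathbf{z}_0)=0$, hence $\widehat F(\mathbf{z}_0,0)=0$, and $D_{\mathbf{z}}\widehat F(\mathbf{z}_0,0)=\tilde L\,D\bar f(\mathbf{z}_0)$. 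Writing $\bar f=(G_1,G_2)$ with $G_1=\rho\Phi(\theta)$ and $G_2$ the second component of \eqref{4.1}, a short computation gives $D\bar f(\mathbf{z}_0)=\begin{pmatrix}0 & \rho_0\Phi'(\theta_0)\\ \partial_\rho G_2(\mathbf{z}_0) & \partial_\theta G_2(\mathbf{z}_0)\end{pmatrix}$, in which $\partial_\rho G_2(\mathbf{z}_0)$ coincides with the left-hand side of \eqref{4.5}; therefore $\det D\bar f(\mathbf{z}_0)=-\rho_0\,\Phi'(\theta_0)\,\partial_\rho G_2(\mathbf{z}_0)\neq0$ precisely under \eqref{4.4}--\eqref{4.5} (equivalently, the Jacobian has no zero eigenvalue). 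The implicit function theorem then yields a branch $\mathbf{z}=\mathbf{z}(\varepsilon)=\mathbf{z}_0+\mathcal{O}(\varepsilon)$ of fixed points of $P_\varepsilon$, i.e.\ an $\tilde L$-periodic solution of the transformed system staying within $\mathcal{O}(\varepsilon)$ of $\mathbf{z}_0$ over a period and hence uniformly in $x$; undoing the transformation gives the $\tilde L$-periodic solution $(\theta_\varepsilon,\rho_\varepsilon)$ of \eqref{3.2} with $|\rho_\varepsilon(x)-\rho_0|+|\theta_\varepsilon(x)-\theta_0|=\mathcal{O}(\varepsilon)$, which lies inside $K$ --- away from the singularity --- for all small $\varepsilon$.

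For the hyperbolicity and instability claim I would examine the Floquet multipliers of the periodic orbit, i.e.\ the eigenvalues of the monodromy matrix $M_\varepsilon=DP_\varepsilon(\mathbf{z}(\varepsilon))$ of the transformed system along the orbit; since the near-identity change of variables is $\tilde L$-periodic in $x$ it conjugates the two period-$\tilde L$ maps, so the monodromy of \eqref{3.2} along $(\theta_\varepsilon,\rho_\varepsilon)$ has the same eigenvalues. Differentiating $P_\varepsilon=\mathrm{id}+\varepsilon\widehat F$ gives $M_\varepsilon=I+\varepsilon\tilde L\,D\bar f(\mathbf{z}_0)+\mathcal{O}(\varepsilon^2)$, so the multipliers are $\lambda_j=1+\varepsilon\tilde L\,\mu_j+\mathcal{O}(\varepsilon^2)$ with $\mu_1,\mu_2$ the eigenvalues of $D\bar f(\mathbf{z}_0)$. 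Since the left-hand side of \eqref{4.6} equals $\Phi'(\theta_0)\,\partial_\rho G_2(\mathbf{z}_0)=-\rho_0^{-1}\det D\bar f(\mathbf{z}_0)$, condition \eqref{4.6} is exactly $\det D\bar f(\mathbf{z}_0)<0$, so $\mu_1<0<\mu_2$ are real and, for small $\varepsilon>0$, $0<\lambda_1<1<\lambda_2$ --- neither on the unit circle --- whence the orbit is hyperbolic and, being a saddle, unstable in $x$. I expect the real difficulty to be not any single estimate but the bookkeeping forced by the singularity at $\rho=\sqrt[4]{c^2/\delta}$: one must keep all solutions, the transformation $w$, the remainder $g$ and the orbit itself confined to a compact annulus $K$ on which the singular coefficients of \eqref{3.2} stay bounded and analytic, and then check that the periodic orbit produced by the implicit function theorem --- and the solutions compared in (i) --- really do remain in such a $K$ for every sufficiently small $\varepsilon$. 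Granting Lemma~\ref{LM 3.1}, which already performs the delicate step of putting the singular Duffing system into standard form, this confinement is the only genuinely subtle point; the averaging estimate, the period-map/implicit-function argument and the Floquet perturbation are then routine.
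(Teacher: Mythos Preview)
Your proposal is correct and follows exactly the classical first-order averaging route: near-identity transformation built from the primitive of the zero-mean part, Gronwall on the slow scale for~(i), period-map plus implicit function theorem for the existence in~(ii), and a first-order Floquet expansion $M_\varepsilon=I+\varepsilon\tilde L\,D\bar f(\mathbf{z}_0)+\mathcal{O}(\varepsilon^2)$ for the hyperbolicity. Your identification of \eqref{4.5} with $\partial_\rho G_2(\mathbf{z}_0)$ and of \eqref{4.6} with $-\rho_0^{-1}\det D\bar f(\mathbf{z}_0)>0$ is right (modulo what looks like a typographical $\rho_0^2$ vs.\ $\rho_0^3$ in the paper's formula), and the saddle conclusion follows.

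The only difference from the paper is one of packaging, not of mathematics: the paper does not carry out any of these steps itself but simply observes that \eqref{4.2}--\eqref{4.6} make $(\theta_0,\rho_0)$ a hyperbolic (saddle) equilibrium of \eqref{4.1} and then invokes the standard averaging theorem --- specifically Theorem~4.1.1 in Guckenheimer--Holmes and Theorem~3.2.3 in Berglund --- as a black box. Your write-up is effectively a sketch of the proof of that cited theorem, specialized to this system and with the extra care you correctly flag about confining everything to a compact annulus away from the singular set $\rho=\sqrt[4]{c^2/\delta}$. So your route is the same as the paper's in spirit; you simply open the box the paper leaves closed.
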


\begin{proof}
Conditions (\ref{4.2})-(\ref{4.6}) imply that $(\theta_0,\rho_0)$
is a instable and hyperbolic fixed point of system (\ref{4.1}).
The proof of this theorem follows directly from \cite[Theorem
3.2.3]{berglund2001perturbation} or \cite[Theorem
4.1.1]{guckenheimer1983nonlinear}.
\end{proof}

\begin{remark}\label{RM 4.1} The instability in
Theorem 4.1 is only relevant to the amplitude equation (2.2),
which is some artificial ``instability" in terms of the evolution
in $x$. There is a set of methods for the study of modulational
instability in time $t$, e.g., see
\cite{zp2xxx,zg2004,ha1999,zb1998}. Recently, based on spectrum
theory and Hamiltonian floquet theory, the method of studying
modulational (temporal) instability of standing waves (with
trivial phases) has been developed for NLS equation with constant
nonlinearity coefficients and periodic potentials by Bronski and
Rapti\cite{bronski2005modulational}, later applied by Porter et
al. \cite{porter2007modulated}. However, here we can not provide
any information upon it with our methods.
\end{remark}

\section{Equilibriums and the averaged equation}\label{Sec
5}\setcounter{equation}{0}

In this section, we will analyze and demonstrate some of the
spatial dynamical features of BECs with a positive chemical
potential. We also assume that $V(x)$ is an analytic and
$L$-periodic function with the least positive period
$L=\tau=\pi/\sqrt{\delta}$, i.e., $p=q=1$.

First, expanding $V$ in a Fourier series, we have
\begin{equation}\label{5.1}
V(x)=b_0+\sum_{k=1}^\infty[
a_k\sin(2\sqrt{\delta}kx)+b_k\cos(2\sqrt{\delta}kx)],
\end{equation}
where all coefficients are real. Let us substitute the expansion
for $V(x)$ given by (\ref{5.1}) into (\ref{4.1}), and by an easy
(perhaps lengthy) computation, we obtain the averaging system
\begin{equation}\label{5.2}
\left\{ \begin{array}{lll} \bar{\rho}'
=\varepsilon\displaystyle\frac{\bar{\rho}}{4\sqrt{\delta}}\sqrt{a_1^2+b_1^2}\sin(2\bar{\theta}+\phi)\\[1em]
\bar{\theta}'
=\displaystyle\frac{3\varepsilon\alpha(\delta\bar{\rho}^4+c^2)}{8\delta^2\bar{\rho}^2}+\displaystyle\frac{\varepsilon
b_0}{2\delta}+\displaystyle\frac{\varepsilon
}{4\delta}\displaystyle\frac{\delta\bar{\rho}^4+c^2}{\delta\bar{\rho}^4-c^2}\sqrt{a_1^2+b_1^2}\cos(2\bar{\theta}+\phi),
\end{array}
\right.
\end{equation}
where
\begin{align*}
    \phi&=\displaystyle\arctan\frac{a_1}{b_1}
    ~~~(\phi=\displaystyle\mathrm{Sign}(a_1)\cdot\pi/2,~\text{if}~b_1=0).
\end{align*}

If ${\alpha}<0$, corresponding to the repulsive nonlinearity, we
take $\theta_0=k\pi-\phi/2$. Let us define the function
$f:\big(\sqrt[4]{\frac{c^2}{\delta}},+\infty \big)\rightarrow
\mathbb{R}$ by
\begin{equation}\label{5.3}
   f(\rho)=\displaystyle\frac{3\alpha(\delta\bar{\rho}^4+c^2)}{8\delta^2\bar{\rho}^2}+\displaystyle\frac{\varepsilon
b_0}{2\delta}+\displaystyle\frac{\varepsilon
}{4\delta}\displaystyle\frac{\delta\bar{\rho}^4+c^2}{\delta\bar{\rho}^4-c^2}\sqrt{a_1^2+b_1^2}.
\end{equation}
It is easy to verify that
\begin{equation*}
f(\rho)\rightarrow-\infty,~\text{as}~\rho\rightarrow+\infty;~f(\rho)\rightarrow+\infty,~\text{as}~\rho\rightarrow\sqrt[4]{c^2/\delta}^+.
\end{equation*}
Using the mean-value theorem, there is at least one root $\rho_0$
of $f$ on the interval $\big(\sqrt[4]{\frac{c^2}{\delta}},+\infty
\big)$. If ${\alpha}>0$, corresponding to the attractive case, we
take $\theta_0=k\pi+(\pi-\phi)/2$. Similarly, we can induce that
the function $f$ also has at least one root $\rho_0$ on the
interval $\big(\sqrt[4]{\frac{c^2}{\delta}},+\infty \big)$. Thus,
in any case, the averaged system (\ref{5.2}) has at least one
equilibrium $(\theta_0,\rho_0)$.

After a simple computation, the Jacobian matrix of the averaged
system (\ref{5.2}) at the equilibrium $(\theta_0,\rho_0)$ is given
by
\begin{equation}\label{5.4}
M=\left(%
\begin{array}{cc}
  0~~~~~~~~~~~~~~~~~~~~~~~~~~~~~~~~~~~~\displaystyle\frac{\varepsilon\rho_0\sqrt{a_1^2+b_1^2}\cos(2\theta_0+\phi)}{2\sqrt{\delta}}
  \\[1em]
  \varepsilon\displaystyle\frac{3\alpha(\delta\rho_0^4+c^2)[\frac{1}{4\delta}(\delta\rho_0^4-c^2)^2+c^2\rho_0^4]+4c^2b_0\delta\rho_0^6}
  {\delta\rho_0^3(\delta\rho_0^4+c^2)(\delta\rho_0^4-c^2)}
  ~~~~~~~~~~~0  \\
\end{array}%
\right).
\end{equation}
Notice that if $M$ has no zero eigenvalue, then the equilibrium
$(\theta_0,\rho_0)$ can be continuable.  If $M$ has no imaginary
eigenvalue, the equilibrium $(\theta_0,\rho_0)$ is hyperbolic
since the two eigenvalues of $M$ have opposite signs. So it
follows that the original system (\ref{3.2}) has a hyperbolic
$L$-periodic solution
$(\theta_\varepsilon(x),\rho_\varepsilon(x))$ with respect to the
spatial variable $x$ near $(\theta_0,\rho_0)$ such that
$(\theta_\varepsilon(x),\rho_\varepsilon(x))\rightarrow
(\theta_0,\rho_0)$, as $\varepsilon\rightarrow0$.

To demonstrate the process of averaging to BECs, a specific
example of numerical computation is given in the following. We
take the integration constant $c=1$ and the parameters
$\delta=1,\alpha=0.3,b_0=-12.4,b_1=-20.19,a_1=0$. By a numerical
computation via MATHMETICA, we obtain equilibriums
\begin{equation*}
(k\pi+\frac{\pi}{2},3.00),~(k\pi+\frac{\pi}{2},2.00),~(k\pi,10.00),k=0,\pm1,\pm2,\cdots
\end{equation*}
for the averaged system (\ref{5.2}).

\begin{figure}[htbp]
  \begin{center}
      \includegraphics[scale=0.8]{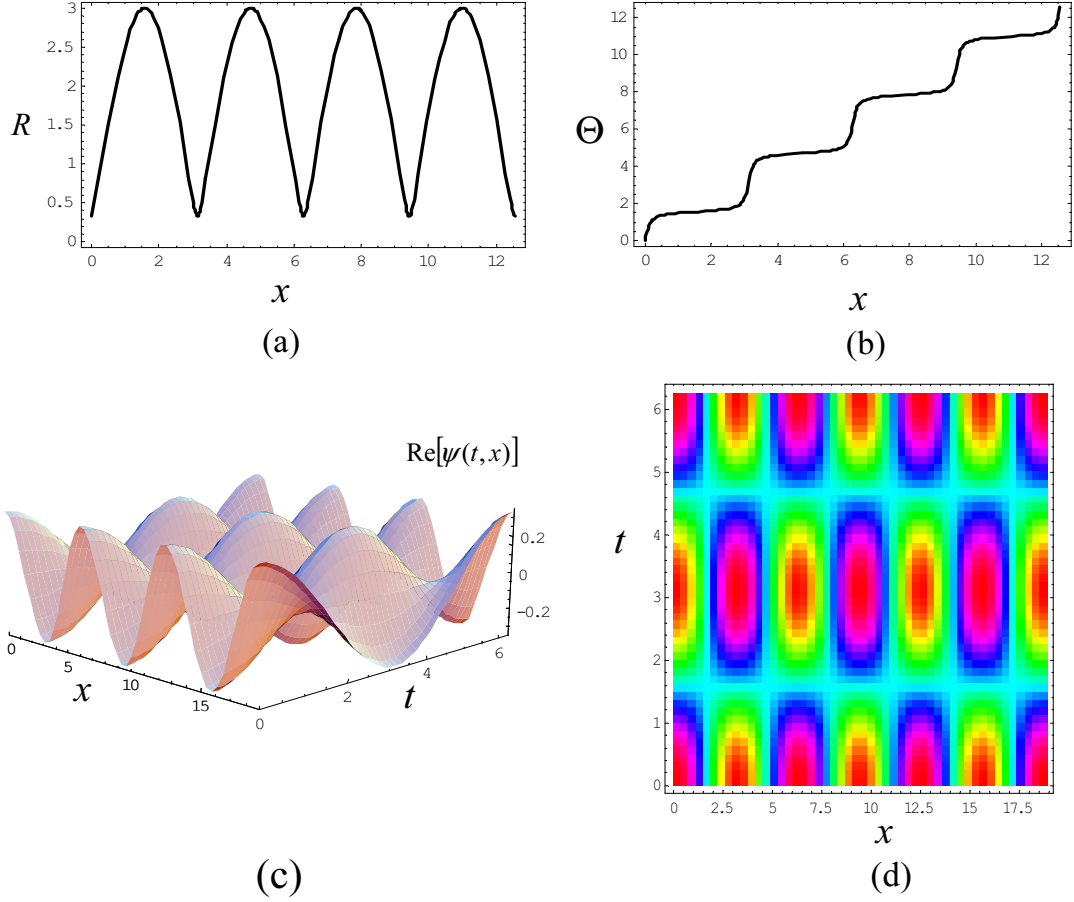}
  \end{center}
\caption{A plot of a MAW $\psi_0(x,t)$ corresponding to the
equilibrium $(\frac{\pi}{2},3)$ for unperturbed system of
(\ref{1}) on the right-plane. The original system (\ref{1}) has a
  MAW or QMAW $\psi_\varepsilon(x,t)$ such
that
$\big|\,|\psi_0(x,t)|-|\psi_\varepsilon(x,t)|\big|=\mathcal{O}(\varepsilon)$
 for all time $t$ and the spatial domains of order $1/\varepsilon$. The parameters we take
 as follows: $c=1,~\delta=1,~\alpha=0.30,~b_0=-12.40,~b_1=-20.20$. (a)~Spatial amplitude $R(x)$ plot; (b)~Nontrivial spatial phase $\Theta(x)$~plot;
(c) A plot of space-time $\mathrm{R}e[\psi(t,x)]$ ; (d) The
density plot of $\mathrm{Re}[\psi(t,x)]$.} \label{Fig 5.1}
\end{figure}

\begin{figure}[htbp]
  \begin{center}
      \includegraphics[scale=0.8]{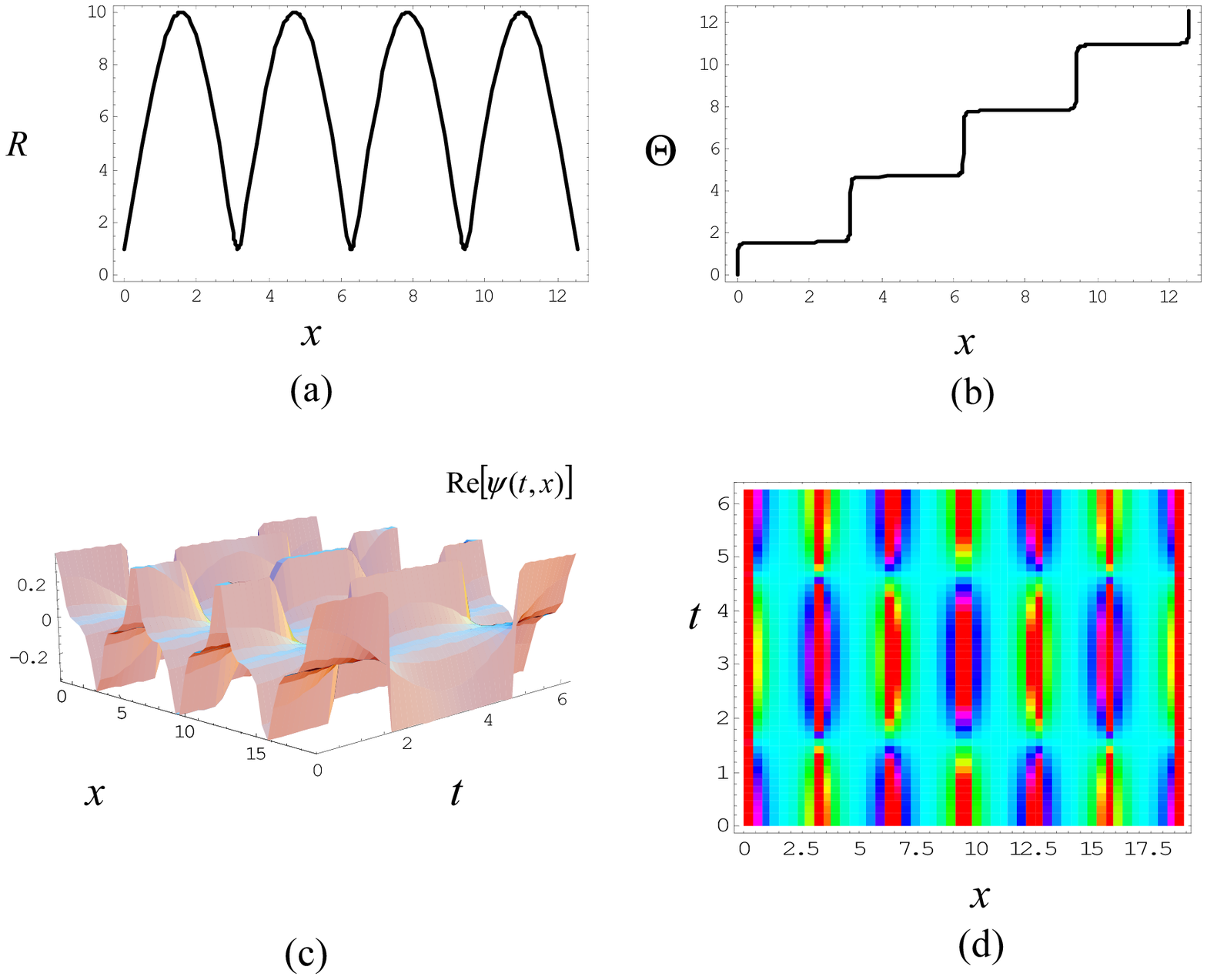}
  \end{center}
\caption{A plot of a MAW $\psi_0(x,t)$ corresponding to the
equilibrium $({\pi},10)$ for unperturbed system of (\ref{1}) on
the right-plane. The original system (\ref{1}) has a  MAW or QMAW
$\psi_\varepsilon(x,t)$ such that
 $\big|\,|\psi_0(x,t)|-|\psi_\varepsilon(x,t)|\big|=\mathcal{O}(\varepsilon)$
 for all time $t$ and the spatial domains of order $1/\varepsilon$. The parameters we take
 as follows: $c=1,~\delta=1,~\alpha=0.30,~b_0=-12.40,~b_1=-20.20$. (a)~Spatial amplitude $R(x)$ plot; (b)~Nontrivial spatial phase $\Theta(x)$~plot;
(c) A plot of space-time $\mathrm{R}e[\psi(t,x)]$ ; (d) The
density plot of $\mathrm{Re}[\psi(t,x)]$.} \label{Fig 5.2}
\end{figure}

 As we know, $(\theta_0,\rho_0)=(k\pi+\pi/2,3.00)$ and
$(\theta_1,\rho_1)=(k\pi,10.00)$ are hyperbolic with the
eigenvalue of linearization
\begin{equation*}
    \lambda_0=\pm0.95\varepsilon,~\lambda_1=\pm15.33\varepsilon.
\end{equation*}
By the averaging theorem, the equilibriums can persist as the
periodic orbits for the original system (\ref{3.1}); in addition,
these periodic orbits are also hyperbolic with respect to spatial
variable $x$.

Returning to (\ref{3.1}), consider its unperturbed system, and
using Lemma \ref{LM 3.1}, the periodic orbit corresponding to the
equilibrium $(\theta_0,\rho_0)=(k\pi+\pi/2,3.00)$ is given by
\begin{equation*}
    \varphi_0(x)=3\sqrt{\sin^2x+\frac{1}{81}\cos^2x}.
\end{equation*}
Following from (\ref{4}), we have the angle function
\begin{equation*}
    \Theta_0(x)=\int_0^x\frac{1}{{\varphi_0}^2(s)}\mathrm{d}s
\end{equation*}
with its mean value
\begin{align*}
    \bar{\Theta}_0&=\frac{1}{L}\int_0^L\frac{1}{{\varphi_0}^2(s)}\mathrm{d}s\\
    &=\frac{1}{\pi}\int_0^\pi\frac{1}{9(\sin^2x+\frac{1}{81}\cos^2x)}\mathrm{d}s=1.
\end{align*}
We conclude that $\psi_0(t,x)=R_0(x)\exp(i[\Theta_0(x)-\mu t])$ is
a MAW of unperturbed system (\ref{1}), see FIG. \ref{Fig 5.1}. The
averaging theorem implies that the MAW $\psi_0(t,x)$ can be
continuable, i.e., there exists a MAW or QMAW
$\psi_\varepsilon(t,x)$ for system (\ref{1}) such that
\begin{equation*}
    \big||\psi_\varepsilon(t,x)|-|\psi_0(t,x)|\big|=
    \mathcal{O}(\varepsilon),
\end{equation*}
for sufficiently small $\varepsilon$. We can have a similar
analysis for the equilibrium $(\theta_1,\rho_1)=(k\pi,10.00)$, see
FIG. \ref{Fig 5.2}.

The equilibrium $(\theta_2,\rho_2)=(k\pi+\pi/2,2.00)$ with the
linearized eigenvalue $\lambda_2=\mp6.44\varepsilon i$ also
persist as periodic orbits for system (\ref{3.1}). Since the
equilibrium is not hyperbolic, one cannot depict the spatial
dynamics of the corresponding continuable periodic orbit. This
question is left open for further study.

\section{Discussion and conclusion}\label{Sec
6}\setcounter{equation}{0}

We have given an abstract result to BECs, see Theorem \ref{TH
4.1}. Using averaging method, we determine the location where the
MAWs or QMAWs exist. Comparing the previous work, we do not
restrict our discussion near the origin since the equation we have
dealt with has some singularities. On the other hand, the MAWs or
QMAWs studied in this paper, which have nontrivial phases, form a
foliation. This is a new result in this context.

However, we can not determine the spatial dynamics of some
periodic orbits corresponding to MAWs or QMAWs since the
equilibriums are not hyperbolic. Maybe second or even higher-order
averaging is required, and this question is left for our further
study. In addition, we can not provide any information upon
modulational instability in time $t$ of MAWs or QMAWs. One reason
is that MAWs or QMAWs obtained in this paper are not standing
waves, but the waves with nontrivial phases, which does not allow
us to apply the method developed by Bronski and
Rapti\cite{bronski2005modulational} directly. This problem may be
not easy but a good topic for further study.

From the view point of a physical application, it might be
reasonable to use the averaging principle to replace a
mathematical model by the corresponding averaged system, to use
the averaged system to make a prediction, and then to test the
prediction against the results of a physical experiment. The study
of this paper exactly gives a frame for application of the
averaging method to BECs.

\section*{Acknowledgements}
The work of Qihuai Liu was supported, in a part, by grant No.
11071181 from National Science Foundation of China and by grant
No. 10KJB110009  from Universities Foundation in Jiangsu Province.
The work of Dingbian Qian was supported  by grant No. 10871142
from National Science Foundation of China.

 \linespread{1.2}
   \selectfont

\end{document}